\newcommand{\ignore}[1]{}
\newcommand{\Nat}{\mathbb{N}}
\newcommand{\cQ}{\mathcal Q}
\newcommand{\cF}{\mathcal F}
\newcommand{\cM}{\mathcal M}
\newcommand{\cS}{\mathcal S}
\newcommand{\cG}{\mathcal G}
\title{Relaxed Reliable Broadcast for Decentralized Trust}
\author{João Paulo Bezerra}{LTCI, Télécom Paris, Institut Polytechnique de Paris}{joaopaulo.bezerra@telecom-paris.fr}{}{}
\author{Petr Kuznetsov}{LTCI, Télécom Paris, Institut Polytechnique de Paris}{petr.kuznetsov@telecom-paris.fr}{}{}
\author{Alice Koroleva}{ITMO University}{alicekoroleva239@gmail.com}{}{}
\authorrunning{J. Paulo Bezerra, P. Kuznetsov, A. Koroleva}
\begin{document}

\ccsdesc[500]{Theory of computation~Design and analysis of algorithms~Distributed algorithms}

\keywords{Reliable broadcast, quorum systems, decentralized trust, consistency measure}



\maketitle

\begin{abstract}
Reliable broadcast is a fundamental primitive,
widely used as a building block for data replication in distributed systems. 
Informally, it ensures that system members deliver the same values, even in the presence of equivocating Byzantine participants.   
Classical broadcast protocols are based on centralized (globally known) \emph{trust assumptions} defined via sets of participants (\emph{quorums}) that are likely not to fail in system executions.
In this paper, we consider the reliable broadcast abstraction in \emph{decentralized trust} settings, where every system participant chooses its quorums locally. 
We introduce a class of relaxed reliable broadcast abstractions that perfectly match these settings.
We then describe a broadcast protocol that achieves \emph{optimal consistency}, measured as the maximal number of different values from the same source that the system members may deliver. 
In particular, we establish how this optimal consistency is related to parameters of a \emph{graph representation} of decentralized trust assumptions.

\end{abstract}

\section{Introduction}

 Reliable broadcast is widely used for replicating data in countless applications: storage systems, state-machine replication, cryptocurrencies, etc. 
 Intuitively, a reliable broadcast protocol allows a system member (the \emph{source}) to broadcast a value, and ensures that correct system members agree on the value they deliver, despite arbitrary (\emph{Byzantine}~\cite{lamport1982byzantine}) behavior of some of them (including the source) and asynchrony of the underlying network.
 More precisely, the primitive ensures the following properties:
 \begin{itemize}
    \item (Validity) If the source broadcasts $m$, then every correct process eventually delivers $m$.
    \item (Consistency) If correct processes $p$ and $q$ deliver, respectively, $m$ and $m'$, then $m = m'$.
    \item (Integrity) Every correct process delivers at most one value, and, if the source is correct, only if it previously broadcast it.
    \item (Totality) If a correct process delivers a value, then all correct processes eventually deliver some value.
\end{itemize}
 Classical reliable broadcast algorithms, starting from Bracha's broadcast~\cite{bracha1987asynchronous},  
assume  that ``enough'' system members remain correct. 
In the \emph{uniform} fault model, where processes
fail with equal probabilities, independently of each other, this assumption implies that only less than one third of processes can fail~\cite{bracha1985asynchronous}. 

More general fault models can be captured via \textit{quorum systems}~\cite{malkhi1998byzantine}. 
Formally, a quorum system is a collection of member subsets (\emph{quorums}).
Every two quorums must have at least one correct process in common, and in every system run, at least one quorum must only contain correct processes. 

Intuitively, quorums encapsulate \emph{trust} the system members express to each other.
Every quorum can act on behalf of the whole system: before delivering a value from a potentially Byzantine source, one should make sure that a quorum of system members have \emph{acknowledged} the value.
Conventionally, these trust assumptions are centralized: all participants share the same quorum system. 

In a large-scale distributed system, it might be, however, difficult to expect that all participants come to the same trust assumptions.
It could be more realistic to resort to \emph{decentralized trust} assumptions by allowing each participant to individually choose its quorum system.  

Damg{\aa}rd et al.~\cite{damgaard2007secure} appear to be the first to consider the decentralized trust setting.
They focused on solving broadcast, verifiable secret sharing and multiparty computation, assuming \emph{synchronous} communication.
Recently, the approach has found promising applications in the field of cryptocurrencies, with the advent of Ripple~\cite{schwartz2014ripple} and Stellar~\cite{mazieres2015stellar} that were conceived as \emph{open} payment systems, alternatives to \textit{proof-of-work}-based protocols~\cite{nakamoto2008bitcoin,ethereum}.
In particular, Stellar and its followups~\cite{garcia2018federated,garcia2019deconstructing} determine necessary and sufficient conditions on the individual quorum systems, so that a well-defined subset of participants can solve the problems of consensus and reliable broadcast.

In this paper, we propose to take a more general, and arguably more realistic, perspective on decentralized trust. 
Instead of determining the weakest model in which a given problem can be solved, we rather focus on determining the strongest problem that can be solved in a given model.    
Indeed, we might have to accept that individual trust assumptions are chosen by the users independently and may turn out to be poorly justified. 
Furthermore, as in the real world, where a national economy typically exhibits strong internal trust but may distrust other national economies,  the system may have multiple mutually distrusting ``trust clusters''.
Therefore it is important to characterize the class of problems that can be solved, given specific decentralized trust assumptions.  

To this purpose, we introduce a class of \emph{relaxed} broadcast abstractions, \emph{$k$-consistent reliable broadcast} ($k$-CRB), $k\in\Nat$, that appear to match systems with decentralized trust.
If the source of the broadcast value is correct, then 
$k$-CRB ensures the safety properties of reliable broadcast (Consistency and Integrity).    
However, if the source is Byzantine, then Consistency is relaxed so that correct participants are allowed to deliver \emph{up to $k$} distinct values.
Moreover, we also refine the Totality property: if a correct process delivers a value, then every \emph{live} correct process\footnote{A process is live in a given execution if at least one of its quorums consists of correct processes only. Intuitively, we may not be able to guarantee liveness to the processes that, though correct, do not ``trust the right guys''.} eventually delivers a value \emph{or} produces an irrefutable evidence that the source is Byzantine. 
In other words, we introduce the \emph{accountability} feature to the broadcast abstraction: either the live correct processes agree on the values broadcast by the source or detect its misbehavior.  

The question now is how to determine the smallest $k$ such that $k$-CRB can be implemented given specific decentralized trust assumptions.  
We show that the trust assumptions induce a collection of \emph{trust graphs}.
It turns out that the optimal $k$ is then precisely the  size of the largest \emph{maximum independent set} over this collection of graphs.

Reliable broadcast is a principal building block for higher-order abstractions, such as state-machine replication~\cite{pbft} and asset transfer~\cite{cons-crypto,astro-dsn}. 
We see this work as the first step towards determining the strongest relaxed variants of these abstractions that can be implemented in decentralized-trust settings.   

The rest of the paper is organized as follows. In Section~\ref{sec:model}, we present our system model. In Section~\ref{sec:b_protocol}, we recall definitions of classical broadcast primitives and introduce a relaxed variant adjusted for decentralized trust settings---\textit{k-consistent broadcast} ($k$-CB). 
Section~\ref{sec:lower} introduces graph representations of trust assumptions, which are used to establish a lower bound on parameter $k$ of relaxed broadcast. 
In Section~\ref{sec:algo}, we introduce a stronger primitive, $k$-consistent reliable broadcast ($k$-CRB) and describe its  implementation. 
Finally, we discuss related work in Section~\ref{sec:related}, and we draw our conclusions in Section~\ref{sec:conclusion}.

\section{System Model}
\label{sec:model}

\subsection{Processes}

A system is composed of a set of \emph{processes} $\Pi = \{p_1,...,p_n\}$. 
Every process is assigned an \emph{algorithm} (we also say \emph{protocol}), an automaton defined as a set of possible \textit{states} (including the \textit{initial state}), a set of \textit{events} it can produce and a transition function that maps each state to a corresponding new state. 
An event is either an $input$ (a call operation from the application or a message received from another process) or an $output$ (a response to an application call or a message sent to another process); \textit{send} and \textit{receive} denote events involving communication between processes.

\subsection{Executions and failures}

A \textit{configuration} $C$ is the collection of states of all processes. In addition, $C^0$ is used to denote a special configuration where processes are in their initial states.
An \textit{execution} (or a \textit{run}) $\Sigma$ is a sequence of events,
where every event is associated with a distinct process and
every \textit{receive}($m$) event has a preceding matching \textit{send}($m$) event. 
A process \textit{misbehaves} in a run (we also call it \emph{Byzantine}) if it produces an event that is not prescribed by the assigned protocol, given the preceding sequence of events, starting from the initial configuration $C^0$. 
If a process does not misbehave, we call it \emph{benign}.
In an infinite run, a process \textit{crashes} if it prematurely stops producing events required by the protocol; 
if a process is benign and never crashes we call it \emph{correct}, and it is 
\emph{faulty} otherwise. 
Let $\textit{part}(\Sigma)$ denote the set of processes that produce events in an execution $\Sigma$.

\subsection{Channels and digital signatures}

Every pair of processes communicate over a \textit{reliable channel}: in every infinite run, if a correct process $p$ sends a message $m$ to a correct process $q$, $m$ eventually arrives, and $q$ receives a message from $p$ only if $p$ sent it.
We impose \emph{no synchrony assumptions}. 
In particular, we assume no bounds on the time required to convey a message from one correct process to another. 
In the following, we assume that all messages sent with a protocol execution are \emph{signed}, and the signatures can be \emph{verified} by a third party.
In particular, each time a process $p$ receives a protocol message $m$ from process $q$, $p$ only accepts $m$ if it is properly signed by $q$. 
We assume a computationally bound \emph{adversary}: no process can forge the signature of a benign process.

\subsection{Decentralized trust}
We now formally define our decentralized trust assumptions.  
A \emph{quorum map} $\cQ: \Pi \rightarrow 2^{2^{\Pi}}$ provides every process with a set of process subsets: 
for every process $p$, $\cQ(p)$ is the set of \emph{quorums of $p$}.  
We assume that $p$ includes itself in each of its quorums: $\forall Q\in \cQ(p): p\in Q$.
Intuitively, $\cQ(p)$ describes what process $p$ \emph{expects} from the system. 
We implicitly assume that, from $p$'s perspective, for every quorum  $Q\in\cQ(p)$, there is an execution in which $Q$ is precisely the set of correct processes. 
However, these expectations may be violated by the environment.
We therefore introduce a \emph{fault model} $\cF\subseteq 2^{\Pi}$ (sometimes also called an \emph{adversary structure}) stipulating which process subsets can be faulty.   
In this paper, we assume \emph{inclusion-closed} fault models that do not force processes to fail: $\forall F\in \cF,\; F'\subseteq F: \; F'\in \cF$. 
An execution $\Sigma$ \emph{complies with $\cF$} if the set of faulty processes in $\Sigma$ is in $\cF$. 

Given a faulty set $F\in \cF$, a process $p$ is called \emph{live in $F$} if it has a \emph{live quorum in $F$}, i.e., $\exists Q\in \cQ(p): Q \cap F = \emptyset$.
Intuitively, if $p$ is live in every $F\in \cF$, such that $p\notin F$, then its trust assumptions are justified by the environment. 

For example, let the uniform \emph{$f$-resilient} fault model: $\cF=\{F\subseteq \Pi: |F|\leq f\}$.
If $\cQ(p)$ includes all sets of $n-f$ processes, then $p$ is guaranteed to have at least one live quorum in every execution.
On the other hand, if $\cQ(p)$ expects that a selected process $q$ is always correct ($q\in \cap_{Q\in\cQ(p)} Q$), then $p$ is not live in any execution with a faulty set such that $q\in F$. %
In the rest of the paper, we assume that the model is provided with \emph{trust assumptions} $(\cQ,\cF)$, where $\cQ$ is a quorum map and $\cF$ is a fault model.

\section{The Broadcast Primitive} 
\label{sec:b_protocol}

The broadcast abstraction exports input events $\textit{broadcast}(m)$ and output events $\textit{deliver}(m)$, for value $m$ in a \emph{value set} $\cM$. 
Without loss of generality, we assume that each broadcast instance  has a dedicated \textit{source}, i.e., the process invoking the $\textit{broadcast}$ operation.\footnote{One can easily turn this (one-shot) abstraction into a \emph{long-lived} one, in which every process can broadcast arbitrarily many distinct values by equipping each broadcast value with a source identifier and a unique \emph{sequence number} and its signature.}
Below we recall the classical abstractions of  consistent and reliable broadcast \cite{cachin2011introduction}. 
The \emph{consistent broadcast} abstraction satisfies:

\begin{itemize}
    \item (Validity) If the source is correct and broadcasts $m$, then every correct process eventually delivers $m$.
    \item (Consistency) If correct processes $p$ and $q$ deliver $m$ and $m'$ respectively, then $m = m'$.
    \item (Integrity) Every correct process delivers at most one value and, if the source $p$ is correct, only if $p$ previously broadcast it.
\end{itemize}

A reliable broadcast protocol satisfies the properties above, plus:

\begin{itemize}
    \item (Totality) If a correct process delivers a value, then all correct processes eventually deliver a value.
\end{itemize}

For our lower bound, we introduce a relaxed version of consistent broadcast. 
A \textit{$k$-consistent broadcast protocol} ($k$-CB) ensures that in every execution $\Sigma$ (where $F \in \mathcal{F}$ is its faulty set), the following properties are satisfied:

\begin{itemize}
    \item (Validity) If the source is correct and broadcasts $m$, then every \emph{live} correct process eventually delivers $m$.
    \item ($k$-Consistency) Let $M$ be the set of values delivered by the correct processes, then $|M| \leq k$.
    \item (Integrity) A correct process delivers at most one value and, if the source $p$ is correct, only if $p$ previously broadcast it.
\end{itemize}

In this paper, we restrict our attention on \emph{quorum-based} protocols~\cite{losa2019stellar}.
Intuitively, in a quorum-based protocol, every process $p$ is expected to make progress if the members of one of its quorums $Q\in\cQ(p)$ appear correct to $p$.
This should hold even if the actual set of correct processes in this execution is different from $Q$.
The property has been originally introduced in the context of consensus protocols~\cite{losa2019stellar}. Here we extend it to broadcast.
Formally, we introduce the following property that completes the specification of  $k$-CB:

\begin{itemize}
    \item (Local Progress) For all $p \in \Pi$ and $Q \in \cQ(p)$, 
    there is an execution in which only the source and  processes in $Q$ take steps, $p$ is correct, and $p$ delivers a value.
\end{itemize}

The key differences of a \textit{k}-CB over a classical consistent broadcast lies in the Validity and the \textit{k}-Consistency properties. 
Our Validity property only ensures progress to \emph{live} correct processes (based on their local quorums). 
Also, since some processes may trust the "wrong guys", it might happen that a faulty source convinces the correct processes to deliver distinct values.
However, given a fault model $\mathcal{F}$, the \textit{k}-Consistency property establishes an upper bound $k$ in values that can be delivered. 
In the classical consistent broadcast, no conflict is allowed in values delivered for a given $\mathcal{F}$, the bound $k$ on such primitive is then equal to $1$ (which clearly also holds for reliable broadcast).

\section{Bounds for \textit{k}-consistent broadcast protocol}
\label{sec:lower}

\subsection{A Graph Representation of Executions}

We use our trust assumptions $(\cQ,\cF)$ to build a graph representation of the execution, in order to investigate the cases in which disagreement may occur in the network, that is, when two or more correct processes deliver distinct values. 
Let $S: \Pi \rightarrow 2^{\Pi}$ be a map providing each process with one of its quorums, that is, $S(p) \in \cQ(p)$. 
Let $\cS$ be the family of all possible such maps $S$.

Given $F \in \mathcal{F}$ and $S \in \cS$, we build an undirected graph $G_{F,S}$ as follows: 
the nodes in $G_{F,S}$ are correct processes ($\Pi-F$) and the edges are drawn between a pair of nodes if their quorums intersect in at least one correct process. 
Formally, $G_{F,S}$ is a pair $(\Pi_{F},E_{F,S})$ in which:

\begin{itemize}
    \item $\Pi_{F} = \Pi - F$
    \item $(p, q) \in E_{F,S} \Leftrightarrow S(p) \cap S(q) \not\subseteq F$
\end{itemize}

\begin{example}
\label{ex:graphs}
Let us consider the system where $\Pi = \{p1,p2,p3,p4\}$, given the faulty set $F = \{p3\}$ and the quorum system for each process:
\[ \cQ(p1) = \{\{p1,p2,p3\},\{p1,p3,p4\}\} \ \ \ \  \cQ(p2) = \{\{p1,p2,p3\},\{p2,p3,p4\}\} \]\[\cQ(p3) = \{\{p1,p2,p4\},\{p2,p3,p4\}\} \ \ \ \ \cQ(p4)  = \{\{p1,p3,p4\},\{p2,p4\},\{p3,p4\}\} \]

If we consider only the correct processes ($p1$, $p2$ and $p4$), there are $12$ different combinations of quorums $S \in \cS$ for these trust assumptions. Now let $S_1 \in \cS$ with: $S_1(p1) = \{p1,p2,p3\}$, $S_1(p2) = \{p2,p3,p4\}$ and $S_1(p4) = \{p2,p4\}$. And let $S_2 \in \cS$ with: $S_2(p1) = \{p1,p2,p3\}$, $S_2(p2) = \{p2,p3,p4\}$ and $S_2(p4) = \{p3,p4\}$. Figure \ref{ex1_graphs} shows the graphs $G_{F,S_1}$ and $G_{F,S_2}$, observe that every pair of quorums used to generate $G_{F,S_1}$ intersects in a correct process, thus resulting in a fully connected graph. On the other hand, since $S_2(p1) \cap S_2(p4) \subseteq F$, $G_{F,S_2}$ is not fully connected.
\end{example}

\begin{figure}
\centering
\includegraphics[scale=0.6]{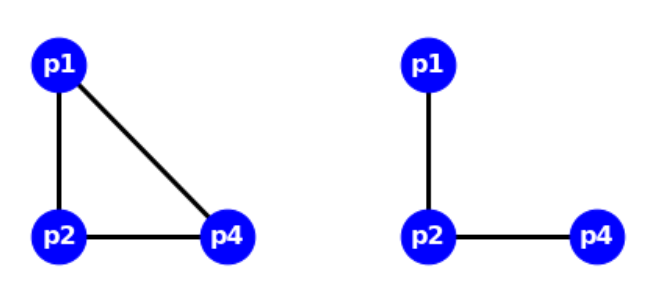}
\caption{Graph structures of Example~\ref{ex:graphs}: $G_{F,S_1}$ and $G_{F,S_2}$ respectively.}
\label{ex1_graphs}
\end{figure}

Before proceeding with the analysis, we recall the following classical definitions:

\begin{definition}[Independent Set] 

A set of nodes $I$ is an \emph{independent set} of a graph $G_{F,S}$ if no pair of nodes in $I$ is adjacent, i.e., $\forall p,q \in I: (p,q) \notin E_{F,S}$.

\end{definition}

\begin{definition}[Independence Number]
The \emph{independence number} of a graph $G_{F,S}$ is the size of its largest independent set.
\end{definition}

Within an independent set of $G_{F,S}$, the quorums of each pair of nodes do not intersect in a correct process. 
The independence number of $G_{F,S}$ helps us in understanding the level of disagreement that might occur in an execution, as we show in the following:

\begin{theorem}
\label{th:n_conflict}
Let $G_{F,S}$ be the graph generated over a fixed $F \in \mathcal{F}$ and $S \in \cS$. 
Let $G_{F,S}$ have an independent set of size $k$. 
Then there exists an execution 
in which up to $k$ distinct values can be delivered by correct processes.
\end{theorem}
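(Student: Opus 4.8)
The plan is to fix an independent set $I=\{q_1,\dots,q_k\}$ of $G_{F,S}$, to extract from \emph{Local Progress} $k$ separate ``local'' executions in which each $q_i$ delivers a value of its own, and then to stitch these executions into a single one by exploiting the equivocation power of a Byzantine source together with the quorum disjointness that independence guarantees. First I would record the structural consequence of independence: for $i\ne j$, $(q_i,q_j)\notin E_{F,S}$ means $S(q_i)\cap S(q_j)\subseteq F$, so the ``correct parts'' $S(q_i)\setminus F$ are pairwise disjoint; in particular each $q_\ell$ belongs to $S(q_\ell)$ and to no other $S(q_i)$, and any process shared by two of the $S(q_i)$'s lies in $F$.

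Next, for each $i$ I would apply Local Progress to the pair $(q_i,S(q_i))$ to obtain an execution $\alpha_i$ in which only the source and the processes of $S(q_i)$ take steps, $q_i$ is correct, and $q_i$ delivers a value; I would take (the natural reading of Local Progress) the processes of $S(q_i)$ to be correct in $\alpha_i$. Using the freedom of the Byzantine source in $\alpha_i$ to inject an arbitrary value --- and assuming $|\cM|\ge k$, which is harmless --- I would arrange the values $m_1,\dots,m_k$ delivered by $q_1,\dots,q_k$ in $\alpha_1,\dots,\alpha_k$ to be pairwise distinct. Then comes the core step, the merge: I build an execution $\Sigma$ whose Byzantine processes are those of $F$ (taken to include the source) and in which every correct process of $S(q_i)\setminus F$ replays its steps and receptions from $\alpha_i$. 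This is well defined because the sets $S(q_i)\setminus F$ are disjoint, so no correct process is asked to play two roles; each process of $F$, being Byzantine and hence unconstrained, sends to the processes of $S(q_i)\setminus F$ exactly what it sent to them in $\alpha_i$, which is consistent since in $\alpha_i$ all traffic stays within $\{\text{source}\}\cup S(q_i)$; and messages between correct processes of different groups are deferred until after every $q_i$ has delivered. Up to its deliver event a correct process of $S(q_i)\setminus F$ thus receives exactly the messages it received in $\alpha_i$ and cannot distinguish $\Sigma$ from $\alpha_i$, so $q_i$ delivers $m_i$ in $\Sigma$; as the $m_i$ are distinct and each $q_i$ is correct, $\Sigma$ witnesses $k$ distinct values delivered by correct processes.

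The main obstacle is making this indistinguishability argument airtight. I must check that deferring the cross-group messages is legitimate even though channels are reliable: this holds because each $q_i$ delivers after a finite prefix of $\alpha_i$ and deliveries are irrevocable by \emph{Integrity}, so the finitely many deliver events of $q_1,\dots,q_k$ can be scheduled in $\Sigma$ before any cross-group message is received, after which the run may be continued arbitrarily without affecting what has already been delivered. I must also check that the processes of $F$ can simultaneously present a group-$i$-consistent and a group-$j$-consistent view; they can, precisely because signatures bind only a process's own messages and because in each $\alpha_i$ no information ever crosses between groups. A secondary point to pin down is the claim that the source can force the $m_i$ to be pairwise distinct: here I would either appeal directly to the adversarial latitude of a Byzantine source or, for a protocol-agnostic argument, invoke \emph{Validity} for a correct source that broadcasts a fresh value in a variant of $\alpha_i$ in which $q_i$ is live via $S(q_i)$.
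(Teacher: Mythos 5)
Your proposal follows essentially the same route as the paper's proof: invoke Local Progress to obtain, for each member $q_i$ of the independent set, a local execution confined to the source and $S(q_i)$ in which $q_i$ delivers, then exploit asynchrony and the fact that the quorums intersect only inside $F$ to merge these local executions into a single one where all $k$ deliveries occur before any cross-group message arrives. Your write-up is in fact more careful than the paper's terse argument --- notably in making explicit the disjointness of the sets $S(q_i)\setminus F$, the indistinguishability claim, the deferral of cross-group messages under reliable channels, and the source's ability to force pairwise distinct values --- but the underlying idea is identical.
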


\begin{proof}
Let $r$ be the source. 
If $I = \{p_{1},...,p_{k}\}$ is an independent set of $G_{F,S}$ of size $k$,
then $\forall p_i,p_j \in I: S(p_i) \cap S(p_j) \subseteq F$.
By the definition of Local Progress, it exists an execution $\Sigma_i$ such that $part(\Sigma_i)= \{r\} \cup S(p_i)$ and $p_i$ delivers a value $m_i$. It then suffices for $r$ and other faulty processes in $S(p_i)$ to behave exactly as they do within $\Sigma_i$ in order to produce the same result.
Since the system is asynchronous, it is possible that $p_i$ delivers a value before any 
correct
process in \textit{part}($\Sigma_i$) receives a message from any $p' \notin \textit{part}(\Sigma_i)$.
In other words, the network behaves as if it was temporarily partitioned.
Now for each $p_{i} \in I$, let $\Sigma_i$ be an execution as described above, we can build $\Sigma$ such that all executions $\Sigma_i$ are subsequences of $\Sigma$, in which no correct process receives any information of conflicting values before $p_{1},...,p_{k}$ deliver $m_{1},...,m_{k}$, respectively.
\end{proof}

\begin{example}
\label{ex:indepedence}
Coming back to Example~\ref{ex:graphs}, we see that the nodes in $G_{F,S_1}$ are fully connected (form a clique), thus resulting in $G_{F,S_1}$ having independence number $1$. 
On the other hand, the biggest independent set in $G_{F,S_2}$ is $\{p1,p4\}$, which means $G_{F,S_2}$ has independence number $2$. 
In an execution where $F$ is the faulty set and processes first hear from quorums in $S_2$ to deliver a value, 
then there is an unavoidable possibility that $p1$ and $p4$ deliver distinct values if $p3$ is the source.
\end{example}

\subsection{Lower bound on \textit{k}}

Given  the pair $(\cQ,\cF)$, 
we define the family of graphs $\cG_{\cQ,\cF}$ that includes all possible $G_{F,S}$, where $F \in \mathcal{F}$ and $S \in \cS$.
Recall that every such $S \in \cS$ associates each  
process to one of its quorums.

\begin{definition}[Inconsistency Number]
Let $\mu:\cG_{\cQ,\cF} \rightarrow N$ map each $G_{F,S} \in \cG_{\cQ,\cF}$ to its independence number.
The \emph{inconsistency number} of $(\cQ,\cF)$ is then $k_{max} = max(\{\mu(G_{F,S})|G_{F,S} \in \cG_{\cQ,\cF}\})$.
\end{definition}

\begin{theorem}
\label{th:allbound}
No algorithm can implement $k$-CB with $k < k_{max}$.
\end{theorem}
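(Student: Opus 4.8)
The plan is to reduce this to Theorem~\ref{th:n_conflict} by exhibiting a single faulty set and quorum-selection map whose induced graph already has a large independent set. Concretely, let $k_{max}$ be the inconsistency number of $(\cQ,\cF)$. By definition of $k_{max}$, there exist $F \in \cF$ and $S \in \cS$ such that the graph $G_{F,S}$ has independence number exactly $k_{max}$; fix such a pair and let $I = \{p_1,\dots,p_{k_{max}}\}$ be a witnessing maximum independent set in $G_{F,S}$. The goal is to argue that in any purported $k$-CB protocol with $k < k_{max}$, this configuration forces a violation of $k$-Consistency.

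The key step is to invoke Theorem~\ref{th:n_conflict} directly: it tells us that, for this fixed $F$ and $S$, there exists an execution $\Sigma$ in which up to $k_{max}$ distinct values are delivered by correct processes. More precisely, one must check that the execution $\Sigma$ constructed in the proof of Theorem~\ref{th:n_conflict} is admissible for \emph{any} correct $k$-CB protocol --- this is exactly why the Local Progress property is part of the $k$-CB specification. For each $p_i \in I$, Local Progress guarantees an execution $\Sigma_i$ with $\textit{part}(\Sigma_i) = \{r\} \cup S(p_i)$ (where $r$ is the designated source, taken to be Byzantine) in which $p_i$ delivers a value $m_i$; by choosing distinct $m_i$ and letting the Byzantine source and the processes in $F$ behave appropriately, the asynchronous interleaving argument stitches the $\Sigma_i$ into one execution $\Sigma$ in which all $p_i$ deliver pairwise distinct values before any cross-information arrives. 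Since the $p_i$ are correct in $\Sigma$ (they lie in $\Pi_F = \Pi - F$) and $\Sigma$ complies with $\cF$ (its faulty set is $F \in \cF$), the set $M$ of values delivered by correct processes in $\Sigma$ satisfies $|M| \geq k_{max} > k$, contradicting $k$-Consistency.

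I would then close by noting that since this holds for every $k < k_{max}$, no algorithm implements $k$-CB below $k_{max}$. The main obstacle I anticipate is not conceptual but one of rigor: ensuring that the execution $\Sigma$ promised by Theorem~\ref{th:n_conflict} is genuinely a valid execution of the \emph{given} protocol rather than of some abstract adversary --- i.e., that the Byzantine source and the faulty processes in $F$ can consistently simulate, within one global execution, the behavior they exhibit across the separate $\Sigma_i$'s, and that no correct process in $\bigcup_i \textit{part}(\Sigma_i)$ is forced by the protocol or by channel reliability to receive a conflicting message before its own delivery. This is where the independent-set hypothesis ($S(p_i) \cap S(p_j) \subseteq F$ for $i \neq j$) is essential: the only processes shared between $\textit{part}(\Sigma_i)$ and $\textit{part}(\Sigma_j)$ are faulty or the source, so the correct processes in the different $\Sigma_i$ fragments never need to exchange messages before delivering, and asynchrony lets us delay all such inter-fragment messages arbitrarily. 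Making this interleaving argument fully precise --- essentially an indistinguishability/partitioning argument --- is the part that needs the most care, but it is already carried out in the proof of Theorem~\ref{th:n_conflict}, so here it suffices to cite it and verify the compliance and correctness bookkeeping.
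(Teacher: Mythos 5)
Your proposal follows the same route as the paper's own proof: fix the pair $(F,S)$ whose graph $G_{F,S}$ attains the inconsistency number $k_{max}$, invoke Theorem~\ref{th:n_conflict} to obtain an execution in which the $k_{max}$ members of a maximum independent set deliver pairwise distinct values, and conclude that $k$-Consistency fails for any $k < k_{max}$. Your version is in fact more explicit than the paper's about the compliance and correctness bookkeeping and about where the burden of the partitioning argument lies, but the underlying argument is the same.
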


\begin{proof}
For a particular $G_{F,S} \in \cG_{\cQ,\cF}$, Theorem~\ref{th:n_conflict} implies that within an independent set $I:|I| = k$, up to $k$ distinct values can be delivered by the correct processes. 
As the independence number is the size of the maximum independent set(s) of a graph, by taking the highest independence number in $\cG_{\cQ,\cF}$ we get the worst case scenario.
It is always possible to build an execution where $k_{max}$ processes deliver $k_{max}$ distinct values before any correct process is able to identify the misbehavior.
\end{proof}

\begin{example}
\label{ex:optimal}
Coming back to Example~\ref{ex:graphs}  again, if we take $S_3$ such that $S_3(p1) = S_3(p2) = \{p1,p2,p3\}$ and $S_3(p4) = \{p3,p4\}$, we have both $(p1,p4) \not\in E_{F,S_3}$ and $(p2,p4) \not\in E_{F,S_3}$, while $(p1,p2) \in E_{F,S_3}$. The independence number of $G_{F,S_3}$ is 2, which means that despite $G_{F,S_2}$ having more edges then $G_{F,S_3}$, the same number of distinct values can be delivered by correct processes. For $\mathcal{F} = \{\{p3\}\}$, since the quorums of $p1$ and $p2$ always intersect on a correct process, none of the graphs has independence number higher then $G_{F,S_3}$, thus, considering $\cQ$ from Example~\ref{ex:graphs} and $\mathcal{F}$, the optimal $k$ for an algorithm implementing $k$-CB would be 2.
\end{example}

\section{Accountable Algorithm for Relaxed Broadcast}
\label{sec:algo}

In the specification of \textit{k}-CB, we inherently assume the possibility of correct processes disagreeing in the delivered value in the presence of a faulty source, but the maximal number of distinct delivered values is determined by $(\cQ,\cF)$. 

In practice, one may need some form of Totality, as in  reliable broadcast.
We might want the (live) correct processes to reach some form of agreement on the set of values they deliver. 

In our setting, we have to define the Totality property, taking into account the possibility of them delivering different values, in case the source is misbehaving.
Therefore we strengthen the protocol by adding an accountability feature: 
once a correct process detects misbehavior of the source, i.e., it finds out that the source signed two different values, it can use the two signatures as a \emph{proof of misbehavior}.
The proof can be then independently verified by a third party.
We model the accusation as an additional output \textit{accuse($mb$)}, where $mb$ is a proof that the source misbehaved. 
When a process $p$ produces \textit{accuse($mb$)}, we say that $p$ \emph{accuses the source} (of misbehavior with proof $mb$).   
Now, in addition to the properties of  \textit{k}-CB, the \textit{k-consistent reliable broadcast} (\textit{k}-CRB) abstraction satisfies:

\begin{itemize}
    \item (Weak Totality) If a correct process delivers a value, then every \emph{live} correct process eventually delivers a value or accuses the source.
    \item (Accuracy) A correct process $p$ accuses the source only if the source is faulty.
    \item (Certitude) If a correct process accuses the source, every correct process eventually does so.
\end{itemize}

We present our \textit{k}-CRB implementation in Algorithm~\ref{alg:1phase}.
Each process maintains local variables \textit{sentecho}, \textit{delivered}, \textit{accused} and \textit{echoes}.
Boolean variables \textit{sentecho}, \textit{delivered} and \textit{accused} indicate whether $p_{i}$ has already sent \textit{ECHO}, delivered a value and accused the source, resp., in the broadcast instance.
Array \textit{echoes} keeps track of \textit{ECHO} messages received from other processes. 

The source broadcasts $m$ by sending a \textit{SEND} message to every process in the system. 
If a process $p_{i}$ receives either a [\textit{SEND},$m$] or a [\textit{ECHO},$m$] for the first time, $p_{i}$ sends an \textit{ECHO} message to every other processes. 
If a received \textit{ECHO} message contains a value $m_2$ that conflicts with a previously received value $m_1$, $p_i$ sends the \textit{ACC} message to every process with the tuple $(m_1,m_2)$ as a proof of misbehavior. 
Once $p_i$ receives echoes with $m$ from at least one of its quorum, it  delivers $m$. 
Once $p_i$ receives an \textit{ACC} message containing a proof of misbehavior, even though $p_i$ has already delivered a value, it also sends \textit{ACC} to every process before accusing the source. 
Notice that a correct process only sends \textit{ECHO} for a single value, and delivers a value or accuses the source once.

\begin{algorithm}
\SetAlgoLined
\BlankLine
\textbf{Local Variables:} \\
\textit{sentecho $\leftarrow FALSE$;} $\setminus$$\setminus$Indicate if $p_{i}$ has sent \textit{ECHO} \\
\textit{delivered $\leftarrow FALSE$;} $\setminus$$\setminus$Indicate if $p_{i}$ has delivered a value \\
\textit{accused $\leftarrow FALSE$;} $\setminus$$\setminus$Indicate if $p_{i}$ has accused the source \\
\textit{echoes $\leftarrow [\perp]^{N}$;} $\setminus$$\setminus$Array of received \textit{ECHO} messages from others processes \\
\BlankLine
\textbf{upon invoking broadcast($m$)}: \{ If $p_i$ is the source \}\\
    \ \ \ \ \textit{send message [\textit{SEND},$m$] to all $p_{j} \in \Pi$;} \\
\BlankLine    
\textbf{upon receiving a message [\textit{SEND},$m$] from $p_{j}$:} \\
    \ \ \ \ \textit{\textbf{if}($\neg$\textit{sentecho}):} \\
    \ \ \ \ \ \ \ \ \textit{sentecho $\leftarrow TRUE$;} \\
    \ \ \ \ \ \ \ \ \textit{send message [\textit{ECHO},$m$] to all $p_{j} \in \Pi$;} \\
\BlankLine   
\textbf{upon receiving a message [\textit{ECHO},$m$] from $p_{j}$:} \\
    \ \ \ \ \textit{echoes[j] $\leftarrow$ $m$;} \\
    \ \ \ \ \textit{\textbf{if}(there exists \textit{echoes}[$k$] $\neq \perp$ such that \textit{echoes}[$k$] $\neq$ \textit{echoes}[j]):} \\
     \ \ \ \ \ \ \ \ \textit{$m1$ $\leftarrow$ \textit{echoes}[j];} \\
     \ \ \ \ \ \ \ \ \textit{$m2$ $\leftarrow$ \textit{echoes}[k];} \\
    \ \ \ \ \ \ \ \ \textit{send message [\textit{ACC},$(m1,m2)$] to all $p_{j} \in \Pi$;} \\
    \ \ \ \ \ \ \ \ \textit{accuse $(m1,m2)$;} \\
    \ \ \ \ \textit{\textbf{if}($\neg$\textit{sentecho}):} \\
    \ \ \ \ \ \ \ \ \textit{sentecho $\leftarrow TRUE$;} \\
    \ \ \ \ \ \ \ \ \textit{send message [\textit{ECHO},$m$] to all $p_{j} \in \Pi$;} \\
\BlankLine
\textbf{upon receiving a message [\textit{ACC},$(m1,m2)$] from $p_{j}$:} \\
    \ \ \ \ \textit{\textbf{if}($\neg \textit{accused}$)}\\
        \ \ \ \ \ \textit{accused $\leftarrow TRUE$;} \\
    \ \ \ \ \ \ \textit{send message [\textit{ACC},$(m1,m2)$] to all $p_{j} \in \Pi$;} \\
    \ \ \ \ \ \ \textit{accuse $(m1,m2)$;}
\BlankLine
\textbf{upon receiving \textit{ECHO} for $m$ from every $q \in Q_{i}, Q_{i} \in \cQ(p_{i})$:} \\
    \ \ \ \ \textit{\textbf{if}($\neg \textit{delivered}$)}\\
    \ \ \ \ \ \textit{delivered $\leftarrow TRUE$;} \\
    \ \ \ \ \ \textit{deliver $m$;} \\
\BlankLine
\caption{1-Phase Broadcast Algorithm: code for process $p_{i}$}
\label{alg:1phase}
\end{algorithm}

Process $p_i$ delivers a value $m$ after receiving \textit{ECHO} from every process in $Q_i \in \cQ(p_i)$, we say that $p_i$ \emph{uses $Q_i$}. 
In our correctness arguments, we fix an execution of Algorithm \ref{alg:1phase} with a faulty set $F \in \mathcal{F}$, and assume that the processes use quorums defined by a fixed map $S \in \cS$.

\begin{lemma}
\label{algo_pairwise}
Let $G_{F,S}$ be the graph generated over $F$ and $S$ with $(p,q) \in E_{F,S}$, if $p$ delivers $m_1$ and $q$ delivers $m_2$, then $m_1 = m_2$. 
\end{lemma}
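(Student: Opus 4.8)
The plan is to derive a contradiction from the assumption that $p$ delivers $m_1$, $q$ delivers $m_2$, $m_1 \neq m_2$, and $(p,q) \in E_{F,S}$. By definition of the edge relation, $(p,q) \in E_{F,S}$ means $S(p) \cap S(q) \not\subseteq F$, so there exists a process $w \in S(p) \cap S(q)$ that is correct (i.e., $w \in \Pi - F$). Since $p$ delivers $m_1$, it did so because it received $[\textit{ECHO}, m_1]$ from every process in $Q_p = S(p)$; in particular $w$ sent $[\textit{ECHO}, m_1]$ to $p$. Symmetrically, since $q$ delivers $m_2$ and uses $Q_q = S(q) \ni w$, the process $w$ sent $[\textit{ECHO}, m_2]$ to $q$.

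The key step is then to argue that a correct process sends an \textit{ECHO} message for at most one value, so $w$ sending both $[\textit{ECHO}, m_1]$ and $[\textit{ECHO}, m_2]$ with $m_1 \neq m_2$ is impossible. This follows from inspecting Algorithm \ref{alg:1phase}: a correct process $w$ sends an \textit{ECHO} only in the handlers for $[\textit{SEND}, \cdot]$ and $[\textit{ECHO}, \cdot]$, and in both cases only under the guard $\neg\textit{sentecho}$, immediately after which it sets $\textit{sentecho} \leftarrow TRUE$. Since $w$ is correct, it follows this code faithfully, so the guard is satisfied at most once and $w$ emits exactly one \textit{ECHO}, carrying a single value. (One also uses that channels are authenticated and signatures unforgeable, so the \textit{ECHO} messages attributed to $w$ by $p$ and $q$ were genuinely sent by $w$.) Hence $m_1 = m_2$, contradicting the assumption.

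I expect the main (minor) obstacle to be stating precisely the "a correct process sends at most one \textit{ECHO}" invariant and making sure the reasoning covers the fact that $w$'s \textit{ECHO} to $p$ and its \textit{ECHO} to $q$ are the \emph{same} message sent to all processes (the algorithm sends \textit{ECHO} "to all $p_j \in \Pi$"), rather than two separately-decided messages; once the single-\textit{ECHO} invariant is in place the rest is immediate. It may be cleanest to first record this invariant as an auxiliary observation and then apply it to the common correct member $w$ of $S(p)$ and $S(q)$.
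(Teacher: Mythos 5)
Your proposal is correct and follows essentially the same argument as the paper's proof: use the edge condition to find a correct process $w$ in $S(p)\cap S(q)$, observe that $w$ would have had to send \textit{ECHO} for both $m_1$ and $m_2$, and conclude this contradicts the protocol. Your version merely spells out in more detail (via the \textit{sentecho} guard) why a correct process echoes at most one value, which the paper leaves as ``not allowed by the protocol.''
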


\begin{proof}
Since $p$ delivers $m_1$ using $S(p)$, all processes in $S(p)$ sent \textit{ECHO} with $m_1$ to $p$. Similarly, all processes in $S(q)$ sent \textit{ECHO} with $m_2$ to $q$. Assume that $m_1 \neq m_2$, since $(p,q) \in E_{F,S} \Leftrightarrow S(p) \cap S(q) \not\subseteq F$, some correct process sent \textit{ECHO} with $m_1$ and $m_2$, which is not allowed by the protocol.
\end{proof}

As an immediate consequence of Lemma~\ref{algo_pairwise}, correct processes $p$ and $q$ might deliver distinct values only if $(p,q) \notin E_{F,S}$. 

\begin{theorem}
\label{algo_th:n_conflict}
Let $k$ be the independence number of $G_{F,S}$, then $k$ is an upper bound in the number of distinct values that can be delivered by correct processes.
\end{theorem}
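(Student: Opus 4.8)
The plan is to reduce the global bound to the pairwise statement already established in Lemma~\ref{algo_pairwise}. The key observation is that the relation ``$p$ and $q$ may deliver distinct values'' is, by the remark following Lemma~\ref{algo_pairwise}, contained in the non-edge relation of $G_{F,S}$: if $(p,q)\in E_{F,S}$, then any values they deliver coincide. So I would argue by contradiction: suppose in this execution correct processes deliver a set $M$ of distinct values with $|M| > k$, where $k$ is the independence number of $G_{F,S}$.

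First I would pick, for each value $v\in M$, a witness process $w_v$ among the correct processes that delivers $v$; since the values are distinct, the processes $w_v$ are pairwise distinct, giving a set $W=\{w_v : v\in M\}$ of correct processes with $|W| = |M| > k$. Next I would show $W$ is an independent set of $G_{F,S}$: for any two distinct $w_v, w_{v'}\in W$ we have $v\neq v'$, so by the contrapositive of Lemma~\ref{algo_pairwise} the edge $(w_v,w_{v'})$ cannot be in $E_{F,S}$; hence no pair in $W$ is adjacent, i.e., $W$ is independent. But then $G_{F,S}$ has an independent set of size $|W| > k$, contradicting the assumption that its independence number is $k$. Therefore $|M|\le k$, which is exactly the claim.

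I do not expect a serious obstacle here, since all the real work is in Lemma~\ref{algo_pairwise}; the only thing to be careful about is the bookkeeping of ``up to $k$'' versus ``at most $k$'' and the (benign) edge case where fewer than $k$ distinct values are delivered, for which the statement holds vacuously. One minor point worth making explicit in the write-up is that each correct process delivers at most one value (by the \textit{delivered} flag in Algorithm~\ref{alg:1phase}), so the map $v\mapsto w_v$ can indeed be chosen injectively and $W$ genuinely has size $|M|$; without uniqueness of delivery the counting argument would still go through but would need a slightly more careful choice of witnesses.

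Here is the proof.

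\begin{proof}
Let $M$ be the set of values delivered by correct processes in the fixed execution, and suppose towards a contradiction that $|M| > k$. By Integrity, each correct process delivers at most one value, so for each $v \in M$ we may choose a correct process $w_v$ that delivers $v$, and the processes $\{w_v : v \in M\}$ are pairwise distinct. Let $W = \{w_v : v \in M\}$, so $|W| = |M| > k$.

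We claim $W$ is an independent set of $G_{F,S}$. Take any two distinct $w_v, w_{v'} \in W$; then $v \neq v'$, and $w_v$ delivers $v$ while $w_{v'}$ delivers $v'$. By the contrapositive of Lemma~\ref{algo_pairwise}, if $(w_v, w_{v'}) \in E_{F,S}$ then $v = v'$, a contradiction; hence $(w_v, w_{v'}) \notin E_{F,S}$. Thus no pair of nodes in $W$ is adjacent, so $W$ is an independent set.

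But then $G_{F,S}$ has an independent set of size $|W| > k$, contradicting the assumption that its independence number is $k$. Therefore $|M| \leq k$.
\end{proof}
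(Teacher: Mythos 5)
Your proof is correct and follows essentially the same route as the paper: both reduce the claim to Lemma~\ref{algo_pairwise} and observe that processes delivering pairwise-distinct values must form an independent set of $G_{F,S}$, hence number at most $k$. Your write-up is in fact the more careful of the two --- the paper's version fixes a maximum independent set $I$ and argues somewhat loosely about a witness $q\notin I$, whereas you construct the witness set explicitly and note the role of Integrity in making the witnesses distinct.
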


\begin{proof}
Lemma~\ref{algo_pairwise} states that if the quorums of correct processes intersect in a correct process, they cannot deliver conflicting values using those quorums. Let $I$ be an independent set in $G_{F,S}$ of size $k$ and assume that more than $k$ distinct values are delivered, then for some $q \notin I, \exists p \in I: (p,q) \in E_{F,S}$, in which $p$ and $q$ deliver distinct values, a contradiction.
\end{proof}

\begin{theorem}
\label{theor:reliable_alg}
Consider a distributed system with trust assumptions $(\cQ,\cF)$. Let 
$k_{max}$ be the inconsistency number of $(\cQ,\cF)$. 
Then Algorithm~\ref{alg:1phase} implements $k_{max}$-consistent reliable broadcast.
\end{theorem}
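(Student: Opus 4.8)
The plan is to verify that Algorithm~\ref{alg:1phase} satisfies each of the properties required of $k_{max}$-consistent reliable broadcast: the $k$-CB properties (Validity, $k$-Consistency, Integrity, Local Progress) with $k = k_{max}$, together with the additional $k$-CRB properties (Weak Totality, Accuracy, Certitude). Most of these follow from a short inspection of the code combined with the graph-theoretic machinery already developed.

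\begin{proof}[Proof sketch]
We must show Algorithm~\ref{alg:1phase} satisfies Validity, $k_{max}$-Consistency, Integrity, Local Progress, Weak Totality, Accuracy, and Certitude.

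\textbf{Integrity} is immediate from the \textit{delivered} flag: a correct process delivers at most once. If the source is correct, it sends only $[\textit{SEND},m]$ for the value $m$ it broadcast; since no correct process ever signs a conflicting value, the only value that can accumulate a full quorum of \textit{ECHO}s at a correct process is $m$, so a correct process delivers only $m$.

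\textbf{$k_{max}$-Consistency.} Fix the execution's faulty set $F\in\cF$ and let $S\in\cS$ record, for each correct process that delivers, the quorum it used (extend $S$ arbitrarily on non-delivering processes). By Theorem~\ref{algo_th:n_conflict}, the number of distinct values delivered by correct processes is at most the independence number of $G_{F,S}$, which is at most $\mu(G_{F,S})\le k_{max}$ by definition of the inconsistency number.

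\textbf{Validity / Local Progress.} If the source is correct and broadcasts $m$, every correct process receives $[\textit{SEND},m]$, hence sends $[\textit{ECHO},m]$; a live correct process $p$ has a quorum $Q\in\cQ(p)$ with $Q\cap F=\emptyset$, so every member of $Q$ is correct and eventually sends $[\textit{ECHO},m]$ to $p$, after which $p$ delivers (unless it already did). Local Progress is the corresponding statement for the restricted execution in which only the source and a chosen quorum $Q$ take steps: the same argument shows $p$ delivers.

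\textbf{Accuracy.} A correct process accuses only upon seeing two \textit{ECHO}/\textit{ACC} messages carrying distinct values $m_1\ne m_2$ that are (transitively) signed by the source. Since a benign source signs only one value and signatures are unforgeable, such a pair can exist only if the source is faulty.

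\textbf{Certitude.} If a correct process $p$ accuses with proof $(m_1,m_2)$, then before doing so $p$ sent $[\textit{ACC},(m_1,m_2)]$ to all processes over reliable channels; every correct process eventually receives it and, if it has not already accused, sets \textit{accused} and accuses with the same proof.

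\textbf{Weak Totality} is the main obstacle and the only property requiring real work. Suppose a correct process $p$ delivers some value $m$ using quorum $S(p)=Q$; then every process in $Q$ sent $[\textit{ECHO},m]$. Let $q$ be any live correct process with live quorum $Q'\in\cQ(q)$, so $Q'\cap F=\emptyset$. Every correct member of $Q'$ receives an \textit{ECHO} for $m$ (forwarded transitively from the members of $Q$, since a correct process receiving an \textit{ECHO} or \textit{SEND} for the first time forwards an \textit{ECHO}) and therefore itself sends an \textit{ECHO}. Now consider the value each correct member of $Q'$ echoes. If all of them echo the same value $m'$, then $q$ eventually receives \textit{ECHO} for $m'$ from all of $Q'$ and delivers $m'$. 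Otherwise two correct members of $Q'$ echo distinct values $m_1\ne m_2$; since both originate (transitively) from a value signed by the source, and since a correct process that ever sees two conflicting signed values broadcasts \textit{ACC} and accuses, at least one correct process sends \textit{ACC}, and by Certitude $q$ eventually accuses. Either way $q$ delivers a value or accuses the source, establishing Weak Totality. The subtle point to handle carefully is that a correct member of $Q'$ might echo a value $m'\neq m$ (because it heard a conflicting \textit{ECHO} first); one must argue that whenever this prevents $q$ from delivering, it is because conflicting signed values are in circulation, which triggers an accusation reaching $q$. \end{proof}
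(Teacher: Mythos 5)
Your proposal is correct and follows essentially the same route as the paper's own proof: a property-by-property check that invokes Theorem~\ref{algo_th:n_conflict} together with the definition of the inconsistency number for $k_{max}$-Consistency, and the same dichotomy for Weak Totality (a live quorum's echoes either all agree, yielding delivery, or conflict, yielding an accusation). Your treatment is somewhat more explicit on the subtle point that a correct quorum member may echo a value other than the delivered one, and you additionally verify Local Progress, which the paper's proof leaves implicit.
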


\begin{proof}
(Integrity) 
Immediate from the algorithm: a process only delivers a value once, and if the source is correct and broadcasts $m$, no process can deliver a value different from $m$. 

(Accuracy) 
A correct process accuses the source after receiving echoes with distinct values or an accusation from another process.
In both cases, the signature of the source is verified.
Correct processes do not broadcast distinct values, and since a faulty process cannot forge signatures of a correct one, it follows that distinct values can only come from a faulty source.

(Certitude) In both situations in which a correct process accuses misbehavior, it previously sends an \textit{ACC} message to every process in the network containing a pair of distinct values. 
The message is then eventually received by every correct process in the network, which accuses the source as well.

(Validity) When a correct process broadcasts $m$, it sends [\textit{SEND},$m$] to every process in the network. 
Every correct process eventually receives the message and echoes it to every process. 
If a correct process has a live quorum $Q$, it will eventually receive \textit{ECHO} with $m$ from all the processes in $Q$ and deliver the value.

(Weak Totality) A correct process $p$ sends an \textit{ECHO} message to every process after receiving it if $p$ has not previously echoed a value. 
Consequently, if some correct process receives an \textit{ECHO} message, every correct process eventually does so. 
If $p$ delivers a value, it must have received at least one \textit{ECHO} message,
in which case, every correct process eventually receives and echoes a value.
If a correct process $q$ has a live quorum $Q$, it eventually receives \textit{ECHO} from all processes in $Q$. 
Two cases are then possible.
If all of the \textit{ECHO} messages received by $q$ contains the same value, then $q$ delivers it.
Otherwise, $q$ accuses misbehavior.

($k_{max}$-Consistency) 
Let $G'_{F,S}$ be the graph whose independence number is $k_{max}$. %
By Theorem~\ref{algo_th:n_conflict}, the number of distinct values that can be delivered by correct processes in a given execution is bounded by $k_{max}$. 
As, by definition, $k_{max}$ is the higher independence number of graphs in $\cG_{\cQ,\cF}$, \textit{$k_{max}$-consistency} is ensured.
\end{proof}

An algorithm implementing $k$-CRB satisfies the required properties of $k$-CB, thus, it also implements $k$-CB. From Theorem~\ref{th:allbound}, no algorithm can implement $k$-CB with $k < k_{max}$, therefore, Theorem \ref{theor:reliable_alg}~implies that Algorithm \ref{alg:1phase} implements $k$-CB, and consequently $k$-CRB, with optimal $k$.

\noindent
\textbf{Computing inconsistency parameters.}
A straightforward approach to find the inconsistency number of $(\cQ,\cF)$ consists in computing the independence number of all graphs $G_{F,S} \in \cG_{\cQ,\cF}$. 
The problem of finding the largest independent set in a graph (called \textit{maximum independent set}), and consequently its independence number, is the \textit{maximum independent set problem} \cite{tarjan1977finding},
known to be \textit{NP-complete} \cite{miller2013complexity}.
Also, the number of graphs in $\cG_{\cQ,\cF}$ may exponentially grow with the number of processes.
However, as the graphs might have similar structures (for example, the same quorums for some processes may appear in multiple graphs),
in many practical scenarios, we should be able to avoid redundant calculations and reduce the overall computational costs.

\section{Related Work}
\label{sec:related}

Assuming synchronous communication,  Damg{\aa}rd et al.~\cite{damgaard2007secure} described protocols implementing broadcast, verifiable secret sharing and multiparty computation in the decentralized trust setting. They introduce the notion of  \textit{aggregate adversary structure}~$\mathcal{A}$: each node is assigned a collection of subsets of nodes that the adversary might corrupt at once. 

Ripple~\cite{schwartz2014ripple} is arguably the first practical partially synchronous system based on  decentralized trust assumptions. 
In the Ripple protocol, each participant express its trust assumptions in the form of an \textit{unique node list} (UNL), a subset of nodes of the network. In order to accept transactions, a node needs to "hear" from at least $80\%$ of its UNL, and according to the original white paper \cite{schwartz2014ripple}, assuming that up to $20\%$ of the nodes in an UNL might be Byzantine, the overlap between every pair of UNL's needed to prevent forks was believed to be $\geq 20\%$.
The original protocol description appeared to be sketchy and informal, and later works detailed the functioning of the protocol and helped to clarify under which conditions its \textit{safety} and \textit{liveness} properties hold \cite{armknecht2015ripple,chase2018analysis,mauri2020formal,amores2020security}.
In particular, it has been spotted~\cite{armknecht2015ripple} that  its safety properties can be violated (a \emph{fork} can happen) with as little as $20\%$ of UNLs overlap, even if there are no Byzantine nodes. 
It then establishes an overlap bound of $>40\%$ to guarantee consistency without Byzantine faults. 
In a further analysis, assuming that at most $20\%$ of nodes in the UNLs are Byzantine, \cite{chase2018analysis} suggests an overlap of $>90\%$ in order to prevent forks, but also provide an example in which the liveness of the protocol is violated even with $99\%$ of overlap. 
Recently, a formalization of the algorithm was presented in \cite{amores2020security}, and a better analysis of the correctness of the protocol in the light of an \textit{atomic broadcast} abstraction was given by Amores-Cesar et al.~\cite{amores2020security}.

The Stellar consensus protocol~\cite{mazieres2015stellar} introduces the \textit{Federated Byzantine Quorum System} (FBQS).
A quorum $Q$ in the FQBS is a set that includes a \emph{quorum slice} (a trusted subset of nodes) for every node in $Q$. 
Correctness of Stellar depends on the individual trust assumptions and are only guaranteed for nodes in the so called \textit{intact set}, which is, informally, a set of nodes trusting the "right guys".
Garc\'ia-P\'erez and Gotsman~\cite{garcia2018federated} formally argue about Stellar consensus, by relating it to Bracha's Broadcast Protocol~\cite{bracha1987asynchronous}, build on top of a FBQS.
The analysis has been later extended~\cite{garcia2019deconstructing} to a variant of state-machine replication protocol that allows \emph{forks}, where 
disjoint intact sets may maintain different copies of the system state.

Cachin and Tackmann~\cite{CT19} defined the notion of \textit{Asymmetric Quorum Systems}, based on individual adversary structures. 
They introduced a variant of broadcast whose correctness is restricted to a \emph{guild}, a subset of nodes that, similarly to the intact nodes in the Stellar protocol, have the "right" trust assumptions. Executions with a guild also ensure consistency (correct processes do not deliver distinct values). In our approach, we relax the consistency property, allowing for more flexible trust assumptions, while using accountability to ensure correctness for every live correct process.

In the similar vein, Losa et al.~\cite{losa2019stellar}, define the quorum system used by Stellar using the notion of a \textit{Personal Byzantine Quorum System} (PBQS), where every process chooses its quorums with the restriction that if $Q$ is a quorum for a process $p$, then $Q$ includes a quorum for every process $q' \in Q$. 
They show that for any quorum-based algorithm  (close to what we call an algorithm satisfying the Local Progress property), consensus is not achievable in partially synchronous systems where two processes have quorums not intersecting on a correct process.
The paper also determines the conditions under which a subset of processes can locally maintain safety and liveness, even though the system might not be globally consistent. We use a similar approach in the context of broadcast, and in addition to a relaxed consistency guarantee, we also parameterize the level of disagreement in the network using the individual trust assumptions.

In the context of distributed systems, accountability has been proposed as a mechanism to detect ``observable'' deviations of system nodes from the algorithms they are assigned with~\cite{detection-case,detection-problem,peerreview}.      
Recent proposals~\cite{polygraph,rala} focus on \emph{application-specific} accountability that only heads for detecting misbehavior that affects correctness of the problem to be solved, e.g., consensus~\cite{polygraph} or lattice agreement~\cite{rala}.   
Our $k$-CRB algorithm generally follows this approach, except that it implements a \emph{relaxed} form of broadcast, but detects violations that affect correctness of the stronger, conventional reliable broadcast~\cite{cachin2011introduction}.

\section{Concluding Remarks}
\label{sec:conclusion}

In this paper, we address a  realistic scenario in which correct processes choose their trust assumptions in a purely decentralized way.
The resulting structure of their trust relations may cause inevitable violations of consistency properties of conventional broadcast definitions.
Our goal is to precisely quantify this inconsistency by considering relaxed broadcast definitions: $k$-consistent broadcast and $k$-consistent reliable broadcast.

In case the broadcast source is Byzantine, the abstractions allow correct processes to deliver up to $k$ different values.
We show that $k$, the optimal ``measure of inconsistency'', is the highest independence number over all graphs $G_{F,S}$ in a family $\cG_{\cQ,\cF}$ determined by the given trust assumptions $(\cQ,\cF)$.
We show that this optimal $k$ can be achieved by a \emph{$k$-consistent reliable broadcast} protocol that, in addition to $k$-consistency also provides a form of accountability: if a correct process delivers a value, then every live correct process either delivers some value or detects the source to be Byzantine. 

A natural question for the future work is to quantify inconsistency in higher-level abstractions, such as distributed storage or  asset-transfer systems~\cite{cons-crypto,astro-dsn} that can be built atop the relaxed broadcast abstractions. 
Another interesting direction would be in self-reconfigurable systems~\cite{rala}: since we expect the system to admit disagreement, once a Byzantine process is detected, other participants may want to update their trust assumptions.
It is also extremely appealing to generalize the very notion of a quorum system to \emph{weighted} quorums, where the contribution of a quorum member is proportional to its \emph{stake} in an asset transfer system~\cite{pastro-tr-21}.
This opens a way towards \emph{permissionless} asset transfer systems with relaxed guarantees.
    
\bibliographystyle{abbrv}

\bibliography{main.bbl}

\end{document}